\documentclass[11pt]{article}

\usepackage{amsmath,amsfonts,amsthm,amssymb}
\usepackage[margin=1in]{geometry}
\usepackage{float}
\usepackage{graphicx}
\usepackage{cite}

\newtheorem{theorem}{Theorem}

\newtheorem{claim}{Claim}

\newcommand{\sumd}{\displaystyle\sum}

\title{Online Lower Bounds via Duality}

\author{
Yossi Azar \thanks{School of Computer Science, Tel-Aviv University.
E-mails: \texttt{azar@tau.ac.il}, \texttt{ilanrcohen@gmail.com}, \texttt{alan.roytman@cs.tau.ac.il}.\newline
}
\and
Ilan Reuven Cohen\footnotemark[1]
\and
Alan Roytman\footnotemark[1]
}

\date{}

\begin{document}

\maketitle
\thispagestyle{empty}
\begin{abstract}
In this paper, we exploit linear programming duality in the online setting (i.e., where input arrives on the fly)
from the unique perspective of designing lower bounds on the competitive ratio.  In particular, we provide a general
technique for obtaining online deterministic and randomized lower bounds (i.e., hardness results) on the competitive
ratio for a wide variety of problems.  We show the usefulness of our approach by providing new, tight lower bounds for three diverse
online problems.  The three problems we show tight lower bounds for are the Vector Bin Packing problem, Ad-auctions (and various
online matching problems),
and the Capital Investment problem.
Our methods are sufficiently general that they can also be used to reconstruct existing lower bounds.

Our techniques are in stark contrast to previous works, which exploit linear programming duality to obtain positive results,
often via the useful primal-dual scheme.
We design a general recipe with the opposite aim
of obtaining negative results via duality.
The general idea behind our approach is to construct a primal linear program based on a collection of input sequences,
where the objective function corresponds to optimizing the competitive ratio.
We then obtain the corresponding
dual linear program and provide a feasible solution, where the objective function yields a lower bound
on the competitive ratio.
Online lower bounds are often achieved by adapting the input sequence according to an
online algorithm's behavior and doing an appropriate ad hoc case analysis.  Using our unifying techniques, we simultaneously
combine these cases into one linear program and achieve online lower bounds via a more robust analysis.
We are confident that our framework can be successfully applied
to produce many more lower bounds for a wide array of online problems.
\end{abstract}

\section{Introduction}
In this work, we develop a framework that illustrates how to use linear programming duality to obtain online lower bounds.
That is, we provide a general technique to obtain hardness results on the competitive ratio for a wide variety of problems.
In contrast, previous works have mainly applied linear programming duality in the context of designing
algorithms that obtain positive results on the competitive ratio.  We demonstrate our approach by exhibiting new, tight online
lower bounds for several different problems.  We also apply our techniques to obtain improved lower bounds for various matching problems.
Our approach can also be used to reconstruct existing lower bounds for a range of problems, such as
online buffer management problems~\cite{BCCFJLST04}, two-sided online bipartite matching~\cite{WW15},
various load balancing problems, and more.  We are confident that the techniques we provide are capable of
yielding many additional lower bounds for a rich set of online problems.
Recall that the benchmark we use to measure the performance of an online algorithm is the \emph{competitive ratio}.  In particular, we compare
the objective value of an online algorithm $ALG$ relative to the objective value of an optimal solution $OPT$ that is omniscient and knows
the entire input sequence in advance.  More formally, for any input sequence $\sigma$, let $ALG(\sigma)$ denote
the value of $ALG$ on input sequence $\sigma$ and $OPT(\sigma)$ denote the value of an optimal solution on the same input sequence.
We say that $ALG$ is $c$-competitive if $ALG(\sigma) \leq c \cdot OPT(\sigma) + a$ for any input sequence $\sigma$ (where we allow some additive constant $a$).
For randomized algorithms, the definition is the same, except that we consider $\mathbb{E}[ALG(\sigma)]$ instead of $ALG(\sigma)$.

Previous works have used linear programming duality from the perspective of designing algorithms, thus obtaining positive results
on the competitive ratio.  In particular, the foundational primal-dual framework has been used extensively
to great success in the design and analysis of algorithms
for many different settings, and builds on the rich theory developed for the primal-dual framework in the offline setting.
The method has been applied to design exact algorithms (e.g., computing flows
in networks) along with developing approximation algorithms, beginning with the fundamental work of
Goemans and Williamson~\cite{GW95a}.  Since then, the primal-dual framework has also played an important role in designing
algorithms in the online setting, providing strong guarantees on the competitive ratio for a wide variety of problems.
This includes well-known online problems such as metrical task systems~\cite{BBN10}, the $k$-server problem~\cite{BBN10,BBN10a}, weighted
paging~\cite{BBN10b}, set cover~\cite{AAABN03}, and load balancing~\cite{ABFP13,MRT13} to name a few.

In this paper, we demonstrate our techniques on three online problems and show how to obtain tight lower bounds for each of them.
Our unifying framework can be summarized via the following recipe.  Given an online problem,
we first construct some parameterized collection of input sequences for the problem and then encode various constraints that any feasible
algorithm (with some competitive ratio guarantee) must obey via a corresponding parameterized primal linear program.  We set the objective function
of each primal linear program in such a way that optimizing it is equivalent to optimizing the competitive ratio.
Considering the parameterized dual linear program and obtaining \emph{any} feasible solution
to it yields a valid lower bound on the competitive ratio of any algorithm.
Our techniques allow us to sidestep ad hoc case analysis that is typically done to obtain online lower bounds,
where the future input sequence is adaptively determined based on the algorithm's behavior in the past.
In particular, we simultaneously encode and combine these cases via linear programming in such a way that producing strong online lower
bounds essentially boils down to finding good solutions to linear programs.
We now describe the three problems to which we apply our framework to obtain new lower bounds.

{\bf Vector Bin Packing:} The first problem we consider
is the classic Bin Packing problem in an online, multidimensional setting, and we refer to this
problem as the Vector Bin Packing problem.  In this problem, we are given a set of vectors $\{v_1,\ldots,v_n\}$
where $v_i = (v_{i}(1),\ldots,v_{i}(d)) \in [0,1]^d$
for all $i \in [n]$ (here, $[n] = \{1,\ldots,n\}$).  We must partition the vectors into feasible sets
$B_1,\ldots, B_m$ such that, for each $1 \leq j \leq m$ and each coordinate $k$, we have $\sum_{i \in B_j}v_{i}(k) \leq 1$.
We refer to each set $B_j$ as a bin, each of which has capacity $1$ for each coordinate $1 \leq k \leq d$.  The objective
function is to minimize $m$, the number of bins used to pack all vectors in a feasible manner.  In the online setting,
$d$-dimensional vectors arrive in an online manner (i.e., on the fly) and must be immediately assigned to an open bin, or to a new bin, so that
the capacity constraints on each bin are satisfied along every dimension.
This problem has applications in cloud computing~\cite{PTU11}, where the use of huge data centers have become more prevalent,
and the costs of providing power and cooling servers have skyrocketed.  In fact, these costs
now exceed the costs of purchasing hardware and servers~\cite{PN08}.  Moreover, representing
jobs as vectors captures the fact that resource usage is multidimensional (e.g., CPU, memory, and I/O).
Assigning multidimensional jobs to machines also has applications for
implementing databases for shared-nothing environments~\cite{MI96},
and optimizing parallel queries in databases, since such tasks typically consume multiple resources~\cite{MI95}.

In the work of~\cite{ACFR16}, they focused
on the setting where all vectors have small values in each coordinate relative to the size of a bin.  In particular,
they gave a $(1+\epsilon)e$-competitive algorithm for arbitrarily large $d$ when all vectors have coordinate values
smaller than $O\left(\frac{\epsilon^2}{\log d}\right)$, for any $\epsilon > 0$ (here, $e$ denotes the base of the natural logarithm).
They also defined a \emph{splittable} model, where a vector $v$ can be split into arbitrarily many fractions
$v\cdot \alpha_1, v\cdot \alpha_2, \ldots, v\cdot \alpha_k$, $\sum_i \alpha_i =1$ (here, each
$v\cdot \alpha_i$ can be placed into a different bin).  In this setting, they gave an $e$-competitive algorithm.
We show how to use our techniques to produce a matching lower bound of $e$ in the splittable setting,
which implies the same lower bound for the original problem (even when vectors are small).

{\bf Ad-auctions:} The second problem we provide a tight lower bound for is the online Ad-auctions problem.  In this problem,
there are $n$ bidders which are known up front.  Each bidder $i$
has a budget of $B(i)$.  In addition, products arrive in an online manner (one product at a time).
As each product $j$ arrives, each buyer $i$ provides a bid $b_{i,j}$ for the product $j$.  The mechanism
must then allocate product $j$ to a buyer, and gains a revenue of $b_{i,j}$ (the decision of
which buyer $i$ receives item $j$ is irrevocable).  Moreover, buyers cannot be charged more than their
total budget.  The objective is to maximize the total revenue.
In the fractional version of the problem, the algorithm may sell a fraction of each item $j$ to
multiple buyers (as long as the sum of all fractions does not exceed one), and the
revenue received from each buyer is appropriately scaled according to the fraction sold.
This problem was introduced in~\cite{MSVV05} and is very important for search engine companies that wish to maximize revenue.  In practice, advertisers associate
their advertisements with search keywords.  They then place bids on the amount they pay whenever a
user clicks on the advertisement.  In particular, when a user submits a search query to the
search engine, an ad-auction is run in which advertisements are immediately allocated
to advertisers (see~\cite{R05} for further motivation).

In~\cite{BJN07}, they gave a $\left(1-\frac{1}{e}\right)$-competitive algorithm for this problem, which matches
the lower bound given in~\cite{MSVV05}.  The work of~\cite{BJN07} also studied the bounded degree setting,
where the number of buyers who are interested in each product is bounded by some parameter $d$.
Under this constraint, they obtained an improved competitive ratio of $1 - \left(1-\frac{1}{d}\right)^d$.  Note that this competitive ratio approaches $\left(1-\frac{1}{e}\right)$
from above for arbitrarily large $d$.  We provide a matching lower bound of $1 - \left(1 - \frac{1}{d}\right)^d$
for this problem.

{\bf Capital Investment:} The third problem we consider is the online Capital Investment problem (also known
as the multislope ski rental problem~\cite{LPR12}), which is a generalization of
the ski rental problem.  In this setting, we wish to produce many units of a particular
commodity at minimum cost.  Over time, orders for units of the commodity arrive in
an online manner, where each unit is characterized by its arrival time.
In addition, we have a set of machines (which are available in advance), where each machine $m_i$ is characterized by its
capital cost $c_i$ and its production cost $p_i$, and can produce arbitrarily many units
of the commodity.  At any
time, the algorithm can choose to buy any machine for a capital cost of $c_i$.  Moreover, the algorithm incurs a production
cost of $p_i$ if it uses machine $m_i$ to produce one unit of the commodity.  An algorithm must decide which machines to purchase and when to
purchase machines, in order to minimize the total cost: the sum of capital costs
plus production costs.  This problem has applications in manufacturing and making investments
in the future so as to minimize costs.  In addition, the problem has
many applications in financial settings~\cite{EFKT92} and asset allocation problems~\cite{R92}.

In~\cite{ABFFLR96}, they studied the setting in which machines arrive in an online manner and can only
be purchased after arrival.  They gave an $O(1)$-competitive algorithm for the problem assuming the case
that lower production costs means higher capital costs.
In~\cite{LPR12}, they
studied the problem where all machines are available up front and gave an $e$-competitive
algorithm for this setting.  Using our techniques, we give a matching lower
bound of $e$.

\subsection*{Contributions and Techniques}
Our main contribution is a general technique that enables us to obtain online lower bounds via
duality.  Using our methods, we obtain the following lower bounds.

\begin{enumerate}
\item {\bf Vector Bin Packing:} We give a tight lower bound on the competitive ratio for the Vector Bin Packing problem
when vectors are small (even in the splittable, randomized setting) which approaches $e$ for arbitrarily large $d$ (here, $d$ denotes
the dimension of each vector).  This improves upon the previous best lower bound of $\frac{4}{3}$
and matches the positive result obtained in~\cite{ACFR16}.

\item {\bf Ad-auctions:} We give a tight lower bound on the competitive ratio for the online
Ad-auctions problem in the bounded degree setting, where $d$ is an upper bound on the number of buyers
who are interested in each product.  We give a lower bound of $1 - \left(1-\frac{1}{d}\right)^d$ against randomized
algorithms, which improves upon the previous best result and matches the positive result given in~\cite{BJN07}.

\item {\bf Capital Investment:} We give a tight lower bound on the competitive ratio for the online
Capital Investment problem.  In particular, we give a lower bound against randomized algorithms that is
arbitrarily close to $e$, which improves over the previous best result of $\frac{e}{e-1}$ (i.e.,
the randomized ski rental lower bound~\cite{KMMO94}) and matches the positive result provided in~\cite{LPR12}.
\end{enumerate}
Our results yield more implications.
\begin{itemize}
\item {\bf Bipartite Matching: }
For the online Ad-auctions problem, our lower bound also implies the same lower bound
against randomized algorithms for the online matching problem~\cite{KVV90} in the bounded degree setting,
where the degree of each arriving node is bounded by~$d$.

\item {\bf Instance-tight: }
For the online Capital Investment problem,
our techniques are sufficiently general that we can provide a different lower bound for each
input configuration (i.e., values for machine capital costs and production costs) that is tight against each such
configuration.
\end{itemize}
A few remarks regarding our results are in order, which may be of independent interest.
\begin{itemize}
\item {\bf Fractional versus Randomized: }
Typically, deterministic fractional lower bounds also imply lower bounds for randomized algorithms since we can
view fractions as the expectation of probabilities (with the appropriate definition of a fractional algorithm).
This easily holds for the online Ad-auctions problem and the online Capital Investment problem.
For the Vector Bin Packing problem, the definition of an appropriate fractional algorithm is more
subtle (see Appendix~\ref{apx:rvbp}), but this statement still holds.

\item {\bf Additive Term: }
For the online Ad-auctions problem, an additive constant in the competitive ratio cannot reduce the
competitive ratio since we can duplicate any lower bound example many times.  For the online Capital Investment problem,
an additive constant cannot reduce the competitive ratio since we can simply scale all costs.  For the Vector Bin Packing problem,
an additive constant can reduce the competitive ratio in general.
However, our lower bound also holds if an additive constant is allowed.  This is achieved in our lower bound
since all input vectors are duplicated many times.  Hence, in all of our lower bounds, we assume that such additive constants are zero.
\end{itemize}

Our technique for providing online lower bounds can be summarized as follows.  Given an online problem,
the first step is to construct some parameterized collection of input sequences for the problem.  Then, given a supposed algorithm
that solves the problem with some competitive ratio, we express constraints that any feasible algorithm
must satisfy as a parameterized primal linear program with an appropriate objective function.
Namely, the objective should be such that optimizing it corresponds to optimizing the competitive ratio.

At this point, we can run any linear program solver on the parameterized primal linear program to obtain a computational lower bound
on the competitive ratio for the problem.  However, we can only run the linear program solver on finitely
many primal linear programs (i.e., for finitely many parameter values of our parameterized program).  Hence,
if we wish to obtain a lower bound for arbitrarily large parameter values (which is useful since our lower bounds
improve as the parameter increases), we must provide a formal, mathematical proof.  It is possible
to obtain such a proof by finding an optimal solution to the parameterized primal linear program, but finding
such a solution and proving it is optimal can be difficult.  To reconcile this issue, we consider the corresponding parameterized dual linear program.
Now, obtaining \emph{any} feasible solution to the dual linear program (which is significantly easier than
finding an optimal solution to the primal linear program) yields a valid lower bound on the competitive ratio of any algorithm.
As we obtain feasible dual solutions that are closer to the optimal dual solution, we improve our online lower bound.

\subsection*{Related Work}
There is a large body of literature concerning the offline primal-dual framework.
In the fundamental work of~\cite{GW95a}, they gave a $2$-approximation algorithm for a wide variety of problems,
including the minimum-weight perfect matching problem, the $2$-matching problem,
 and the prize-collecting traveling salesman problem.  The primal-dual framework
has also been applied in the online setting in many previous works.
In~\cite{AAABN03}, they studied the online set cover problem and presented an
$O(\log m \log n)$-competitive algorithm, where $n$ denotes the size of the ground
set and $m$ denotes the number of subsets.
In~\cite{BBN10}, they gave an $\exp(O(\sqrt{\log \log k \log n}))$-competitive algorithm
for the $k$-server problem on $n$ uniformly spaced points on a line (where $k$
denotes the number of servers).  In~\cite{BBN10a}, they gave an $(r+O(\log k))$-competitive
algorithm for the finely competitive paging problem, where $k$ denotes the cache size
and the offline algorithm pays a cost of $\frac{1}{r}$ for renting (while the online
algorithm pays a cost of $1$).  In~\cite{BBN10b}, they gave an $O(\log k)$-competitive randomized
algorithm for the weighted
paging problem, where $k$ denotes the cache size.  The work of~\cite{ABFP13} considered
a generalized framework, which is referred to as the Online Mixed Packing and Covering (OMPC) problem,
where they considered linear programs with packing constraints and covering constraints.
There are also some results in load balancing for unrelated machines with
activation costs~\cite{ABFP13,MRT13}.  Please see~\cite{BN09} for a more comprehensive
treatment of the literature on the online primal-dual framework.

{\bf Vector Bin Packing:} There is a large body of work for the Vector Bin Packing problem.
The online problem was studied in Azar et al.~\cite{ACKS13}.
They gave a lower bound of $\Omega\left(d^{\frac{1}{B} - \epsilon}\right)$ on the competitive ratio of any algorithm (for any $\epsilon > 0$),
and designed an online algorithm that is $O\left(d^{\frac{1}{B-1}}(\log d)^{\frac{B}{B-1}}\right)$-competitive
for any integer $B \geq 2$ (here, $B$ denotes the ratio between the largest coordinate and each bin's capacity).
In~\cite{ACFR16}, they studied the same problem where vectors are small relative to each bin's capacity.  For arbitrarily
large $d$, they gave a $(1+\epsilon)e$-competitive algorithm whenever each vector's coordinates are at most
$O\left(\frac{\epsilon^2}{\log d}\right)$, for any $\epsilon > 0$.
In the splittable model, they gave an $e$-competitive
algorithm for arbitrarily large $d$.
In the offline setting, the work of~\cite{CK99} gave an algorithm that achieved an
$O\left(1 + \epsilon d + \ln\left(\frac{1}{\epsilon}\right)\right)$-approximation in polynomial time
for large $d$.
For the single dimensional setting, where $d=1$ (i.e., Bin Packing), there is a vast body of literature.
There are surveys available for the offline and online settings, along with some multidimensional results
and other models~\cite{GW95,CCGMV13}.

{\bf Ad-auctions:} The online Ad-auctions problem has also received a great deal of attention in the community.
It was first introduced by~\cite{MSVV05}, where they gave a deterministic $\left(1- \frac{1}{e}\right)$-competitive algorithm for the problem,
under the assumption that each bidder's budget is large relative to their bids.  They built
on the works in online bipartite matching~\cite{KVV90} and online $b$-matching~\cite{KP00}
(which is a special case of the online Ad-auctions problem, where all bidders have the same budget $b$
and all bids are $0$ or $1$).
The work of~\cite{BJN07} gave a generalized $\left(1-\frac{1}{e}\right)$-competitive algorithm for the problem within the
primal-dual framework, which matches the bounds
given in~\cite{MSVV05}.  In addition, they gave a deterministic $\left(1-\frac{1}{e}\right)$-competitive
fractional algorithm for the online matching problem in bipartite graphs.  They
also considered multiple extensions to the online Ad-auctions framework.  These
include the ability to handle multiple slots, in which ad-auctions can be allocated
to $\ell$ slots, and buyers can submit slot dependent bids on keywords (their algorithm
maintains the same competitive ratio guarantee).  They also considered the bounded degree setting,
in which they gave a $\left(1 - \left(1 - \frac{1}{d}\right)^d\right)$-competitive algorithm (where $d$
denotes an upper bound on the total number of buyers who are interested in each product).
There are many other works, in both the offline and online settings, that have considered maximizing the
revenue of a seller in various models~\cite{AM04,BH05,BCIMS05,MS06}.

{\bf Capital Investment:} The online Capital Investment problem (sometimes referred to as the multislope ski rental problem~\cite{LPR12})
and its variants have also been studied extensively in the literature.
In~\cite{D03}, they gave a $4$-competitive deterministic algorithm and a $3.618$ lower bound, along with a $2.88$-competitive
randomized algorithm.  In~\cite{ZPX11}, they studied a similar model where machines have some duration and can expire.
They gave a $4$-competitive algorithm for this problem, along with a deterministic matching lower bound.
In~\cite{LPR12}, they studied the problem where machines
are not necessarily bought from scratch.  They provided an $e$-competitive
algorithm for this setting.  Under an additive assumption regarding machine capital costs, they gave
an improved algorithm with a competitive ratio of $\frac{e}{e-1}$.
In~\cite{LP15}, they studied the case when there are only two machines and gave matching upper and lower bounds
on the competitive ratio for deterministic and randomized algorithms.
In~\cite{ABFFLR96}, they gave an $O(1)$-competitive algorithm for the problem where machines arrive online, assuming the case
that lower production costs implies higher capital costs.  On the other hand, if both capital and production costs drop,
they gave an $O(\min\{\log C, \log \log P, \log M\})$-competitive algorithm, where $C$ is the
ratio of the highest to lowest capital costs, $P$ is the ratio of the highest to lowest production costs,
and $M$ is the number of machines that arrive online.
The online mortgage problem has also been studied~\cite{EK93}, which is similar to the
online Capital Investment problem except that future demand is known and capital costs
are fixed.

\section{Multidimensional Vector Bin Packing}\label{sec:vbp}
In this section, we study the online $d$-dimensional Vector Bin Packing problem in the splittable setting.
In this problem, we are given vectors $\{v_1,\ldots,v_n\}$ that arrive in an online manner, where $v_i = (v_{i}(1),\ldots,v_{i}(d)) \in [0,1]^d$
for all $i \in [n]$ (recall that $[n] = \{1,\ldots,n\}$).  We must assign incoming vectors into bins $B_j$ such that, for each bin $B_j$ and each coordinate $k$,
we have $\sum_{i \in B_j}v_{i}(k) \leq 1$.  The goal is
to minimize the number of bins opened to feasibly pack all vectors.  In the splittable model, a vector $v$
can be split into arbitrarily many fractions, $v\cdot \alpha_1, v\cdot \alpha_2, \ldots, v\cdot \alpha_k$,
$\sum_i \alpha_i =1$ (here, each $v\cdot \alpha_i$ can be placed into a different bin).
In the splittable model, $OPT$ is easy to compute and is given by $OPT = \max_k\lceil\sum_i{v_{i}(k)}\rceil$ (see~\cite{ACFR16}).
Moreover, the assumption that each vector's entries are in $[0,1]$ is irrelevant, as any big vector can be split
into many smaller vectors.

We give a lower bound that is arbitrarily close to $e$ for the Vector Bin Packing problem (the lower
bound approaches $e$ for arbitrarily large $d$).  The lower bound
we present in this section is against deterministic fractional algorithms, which corresponds to the splittable model.
This implies the same lower bound  for the original problem (i.e., the non-splittable setting).  We show how to adapt
our lower bound techniques to obtain a lower bound against randomized algorithms in Appendix~\ref{apx:rvbp}.
Before applying our technique, we prove the following useful claim.

\begin{claim}
	\label{cla:f}
	For any function $f(x)$ such that $f'(x)$ is a monotone non-increasing function, the following holds for all integers $i \geq j$:
	$$   \sum_{r=j}^{i-1} f'(r+1) \leq f(i) - f(j) \leq \sum_{r=j}^{i-1} f'(r) .$$
\end{claim}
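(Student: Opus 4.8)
The plan is to prove the claim via the Fundamental Theorem of Calculus applied on each unit subinterval $[r, r+1]$ for $r = j, j+1, \ldots, i-1$, combined with the monotonicity of $f'$. The core observation is that $f(i) - f(j)$ telescopes into a sum of consecutive differences, and each such difference is an integral of $f'$ that can be bounded above and below by the value of $f'$ at the appropriate endpoint.

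First I would write the telescoping identity
\[
f(i) - f(j) = \sum_{r=j}^{i-1} \bigl( f(r+1) - f(r) \bigr),
\]
which holds for any integers $i \geq j$ (when $i = j$ the sum is empty and both sides are $0$, so the inequalities hold trivially as equalities). The next step is to express each summand as an integral,
\[
f(r+1) - f(r) = \int_{r}^{r+1} f'(x)\, dx,
\]
which is valid since $f'$ exists and is monotone, hence integrable on each unit interval.

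The key step then uses the hypothesis that $f'$ is monotone non-increasing. On the interval $[r, r+1]$, the value $f'(x)$ lies between its value at the right endpoint and its value at the left endpoint: specifically $f'(r+1) \leq f'(x) \leq f'(r)$ for all $x \in [r, r+1]$. Integrating this chain of inequalities over an interval of length $1$ yields
\[
f'(r+1) \leq \int_{r}^{r+1} f'(x)\, dx \leq f'(r),
\]
so each telescoped difference $f(r+1) - f(r)$ is sandwiched between $f'(r+1)$ and $f'(r)$. Summing these bounds over $r = j$ to $i-1$ gives exactly the two desired inequalities.

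I do not anticipate a genuine obstacle here, as the argument is a standard Riemann-sum style estimate; the only point requiring mild care is ensuring the monotonicity hypothesis is invoked in the correct direction (non-increasing $f'$ means the right endpoint value is the lower bound and the left endpoint value is the upper bound on each subinterval), and confirming the degenerate case $i = j$. If one wished to avoid integration entirely, an alternative is to apply the Mean Value Theorem to each difference, writing $f(r+1) - f(r) = f'(\xi_r)$ for some $\xi_r \in (r, r+1)$, and then bound $f'(\xi_r)$ using monotonicity by $f'(r+1) \leq f'(\xi_r) \leq f'(r)$; this sidesteps any integrability concern and relies only on differentiability of $f$ together with the monotonicity of $f'$.
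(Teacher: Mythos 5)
Your proof is correct and follows essentially the same route as the paper's: decompose $f(i)-f(j)$ into unit-interval integrals $\int_r^{r+1} f'(x)\,dx$ and bound each by $f'(r+1)$ and $f'(r)$ using the monotonicity of $f'$. The telescoping presentation and the Mean Value Theorem alternative are cosmetic variations on the same argument.
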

\begin{proof}
	We have
	$f(i) - f(j) = \int_j^i f'(x) dx = \sum_{r=j}^{i-1} \int_r^{r+1} f'(x) dx$
	and $f'(r+1) \leq \int_r^{r+1} f'(x)dx \leq f'(r)$ since $f'(x)$ is monotone non-increasing.
\end{proof}
Our proof of the lower bound holds against any algorithm in the splittable setting that can open fractions of bins
(i.e., where the capacity constraints are appropriately reduced).
\begin{theorem}
There does not exist a deterministic fractional algorithm with a competitive ratio strictly better than $e$ for the $d$-dimensional Vector Bin Packing
problem, where $d$ is arbitrarily large.
\end{theorem}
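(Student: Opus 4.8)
The plan is to construct a carefully parameterized family of input sequences, encode the competitive-ratio constraints as a primal linear program, pass to its dual, and exhibit a feasible dual solution whose objective approaches $e$ as $d \to \infty$. Following the paper's general recipe, I would first design the adversarial instance. The natural construction exploits the fact that in the splittable model $\opt = \max_k \lceil \sum_i v_i(k)\rceil$. I would use $d$ dimensions and release vectors in phases, where each phase stresses a fresh coordinate: for instance, at phase $r$ the adversary presents vectors that load coordinate $r$ (and possibly previously-used coordinates) in such a way that $\opt$ can stay small by spreading mass cleverly across bins, while any online algorithm, not knowing which coordinate will be stressed next, is forced to commit bin-capacity prematurely and accumulate waste. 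The key quantitative phenomenon I expect to drive the bound toward $e$ is that an algorithm must hedge its fractional bin usage against $d$ possible futures, and the optimal hedging strategy incurs an overhead governed by the harmonic-type sum $\sum_{r} 1/r$, whose exponential behavior yields $e$ in the limit.

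\medskip

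\noindent
Once the instance is fixed, I would write down the primal LP. Its variables would represent (fractional) quantities describing how much bin capacity the online algorithm opens or commits at each step, and the constraints would encode (i) feasibility of the packing (capacity constraints along each coordinate), (ii) the requirement that all arriving vector mass is packed, and (iii) the competitiveness requirement $\mathrm{ALG} \le c\cdot \opt$ after each prefix of the input. The objective would be set so that minimizing the algorithm's cost against the adversary is equivalent to determining the best achievable competitive ratio $c$; thus the primal optimum is a lower bound on $c$ only after we verify that every valid online algorithm induces a feasible primal point. I would then mechanically form the dual. The dual variables correspond to the primal constraints, and by weak LP duality \emph{any} feasible dual solution gives a valid lower bound on the competitive ratio — this is precisely the leverage the paper advertises, since I need only guess one good dual point rather than solve the primal exactly.

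\medskip

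\noindent
The technical heart is constructing an explicit feasible dual solution with objective tending to $e$. Here I anticipate using Claim~\ref{cla:f} to control the relevant sums. The dual feasibility constraints will plausibly reduce to inequalities of the form $f(i) - f(j) \le \sum_{r=j}^{i-1} f'(r)$ (or the lower-bound counterpart) for a suitably chosen convex or concave $f$ with monotone derivative; Claim~\ref{cla:f} then discharges these constraints by comparing discrete sums to the integral $\int_j^i f'(x)\,dx$. A natural guess for the dual multipliers is geometric or harmonic in the phase index $r$, scaled so that the accumulated dual objective is a Riemann-type sum approximating $\int_0^1 e^x\,dx$ or the product $\prod_r (1 + 1/r)$-style expression that converges to $e$. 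I would verify each dual inequality by substituting these multipliers and invoking the claim, then compute the dual objective in closed form and take the limit $d \to \infty$.

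\medskip

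\noindent
\textbf{The main obstacle} I expect is twofold. First, setting up the primal so that the competitiveness constraints are genuinely \emph{forced} on every online algorithm — and so that the primal feasible region is exactly the set of achievable algorithm behaviors — requires care: an overly loose encoding yields a weak (non-tight) bound, while an overly tight one may not correspond to a valid adversary. Getting the instance and the LP to align so the optimum is precisely $e$ (not merely $\Omega(1)$) is the crux. Second, guessing the closed-form dual solution is the creative step; I would reverse-engineer it by solving the LP numerically for small $d$, inspecting the optimal dual multipliers, recognizing their functional form (likely involving $(1-1/d)^r$ or partial sums matching $e$'s series), and then proving feasibility in general via Claim~\ref{cla:f}. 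The verification that the dual objective converges to $e$ rather than to some nearby constant is where the precise choice of $f$ and multipliers must be pinned down exactly.
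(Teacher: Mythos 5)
Your proposal correctly identifies the paper's framework (adversarial phased instance, primal LP encoding feasibility and competitiveness, weak duality, Claim~\ref{cla:f} to compare sums with integrals), but it stops at the level of a plan: the two steps that constitute the actual mathematical content are left as acknowledged obstacles rather than resolved. First, the instance is not pinned down. The paper's sequence is quite specific: in phase $i$ it releases $A$ copies of the vector $v_i$ with $v_i(k)=1$ for $k<i$, $v_i(i)=i$, and $v_i(k)=0$ for $k>i$. The scaling of the diagonal entry to $i$ is essential --- it forces $\opt = jA$ after phase $j$ and makes the capacity constraint for coordinate $k$ in phase-$j$ bins read $k\cdot x_{k,j}+\sum_{r=k+1}^{d}x_{r,j}\le c$, which is exactly what produces the dual constraint $i\cdot z_{i,j}+\sum_{r=j}^{i-1}z_{r,j}\ge y_i$ whose continuous relaxation is a solvable first-order ODE. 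A generic ``stress a fresh coordinate each phase'' instance does not obviously have this structure, and without it the bound degrades (indeed the paper shows that even with the right instance, the naive dual guess $y_i=1$ only yields a lower bound of $2$).

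Second, and more importantly, your guess for the dual solution points in the wrong direction. You anticipate multipliers that are ``geometric or harmonic,'' of the form $(1-1/d)^r$, or a Riemann sum for $\int_0^1 e^x\,dx$ or $\prod_r(1+1/r)$; none of these is the mechanism here (and $\prod_r(1+1/r)$ telescopes to a linear quantity, not $e$). The paper's assignment is $y_i=1/i$ and $z_{k,j}=\frac{1-\ln(k/j)}{k^2}$ supported on $j\le k\le\lfloor ej\rfloor$, obtained by solving $x f_j'(x)+f_j(x)-f_j(j)=\frac1x$ with $f_j(j)=0$, giving $f_j(x)=\frac{\ln(x/j)}{x}$. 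The constant $e$ enters because $f_j'$ vanishes precisely at $x=ej$, so $\sum_k z_{k,j}\approx f_j(ej)=\frac{1}{ej}$; the objective then becomes $H(d)\big/\bigl(\tfrac{H(d)}{e}+O(1)\bigr)\to e$. Reverse-engineering from small-$d$ numerics, as you suggest, is a reasonable discovery strategy, but as written the proposal neither produces the instance whose LP has optimum $e$ nor the dual point certifying it, so it does not yet constitute a proof.
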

\begin{proof} $ $\\
\indent	\textbf{Sequence definition:} 	We give a sequence $\sigma(d)$ for which the competitive ratio approaches $e$ for any algorithm
as $d$ increases.  For a fixed $d$, the sequence consists of $d$ phases. In each phase $i$, we have $A$ vectors of type $v_i$ arrive
where $v_i$ is given by: $v_1 = (1, 0, 0, \ldots , 0)$, $v_2 = (1, 2, 0, \ldots , 0)$, $v_3 = (1, 1, 3, \ldots , 0)$, $\ldots$,
$v_d = (1, 1, 1, \ldots , d)$.
More formally, 
$$v_i(k) =  \begin{cases} 
1 & \textrm{if } k < i,\\
i&  \textrm{if } k = i,\\
0 & \textrm{if } k > i.
\end{cases}
$$

Clearly, $OPT$ is $j\cdot A$ after the $j^{th}$ phase. Therefore, any $c$-competitive algorithm must open at most a total amount of $c \cdot j \cdot A$ fractional
bins (i.e., total sum of fractions) by the end of the $j^{th}$ phase.  For any deterministic algorithm, we can assume without loss of generality that an
online algorithm opens exactly a
$c \cdot j \cdot A$ fractional amount of bins.  Therefore, in each phase $j$, the algorithm opens a $c \cdot A$ fractional amount of additional bins.
Note that we only analyze the algorithm's behavior at the end of each phase.

\textbf{The primal linear program:}
\begin{itemize}
	\item $x_{i,j}$ -- the total fraction of vectors of type $v_i$ that the online algorithm assigns
to bins that are opened in phase $j$ ($i \geq j$).
	\item $c$ -- a variable representing the competitive ratio guarantee.
\end{itemize}
When writing our linear programs, we refer to each constraint by its associated dual variable.
\begin{align*}
  \text{min   } & c \\
\text{s.t.: }&  \sum_{r=j}^d v_r(k) x_{r,j} \leq c& &\text{constraint $z_{k,j}$}& &\forall k,j \in [d]& \\
 &\sum_{r=1}^i x_{i,r} = 1   && \text{constraint $y_i$}& &\forall i \in [d]& \\
 &c,x_{i,j} \geq 0 & && &\forall i\geq j : i,j \in [d], &
\end{align*}
where the constraint $z_{k,j}$ corresponds to the volume constraint of coordinate $k$ in bins opened during phase
$j$ (note that the factor of $A$ is canceled).  In particular, for any dimension $k$ and for any bins opened
in phase $j$, the total volume that can be placed on these bins is at most $c \cdot A$ since this is how much
space is available.  The constraint corresponding
to $y_i$ says that all vectors of type $v_i$ must be fully assigned to bins opened in phases $1$ through $i$. 
We omit constraints $z_{k,j}$ for $k < j$ (since such constraints are never tight). In addition, in constraint $z_{k,j}$
we omit the term $v_r(k) x_{r,j}$ for $r < k$ since $v_r(k) = 0$.  We get  
$$ k\cdot x_{k,j} + \sum_{r=k+1}^d x_{r,j} \leq c \quad \textrm{  constraint } z_{k,j} \quad\quad\quad \forall k \geq j : k, j \in [d]. \quad\quad \quad  $$ 

\textbf{The dual linear program:}
\begin{align*}
\text{max }   &\sum_{r=1}^d y_r \\
\text{s.t.: }  &\sum_{k=1}^d \sum_{j=1}^d z_{k,j} \leq 1&  &\text{constraint $c$}&\\
& y_i \leq  i\cdot z_{i,j} + \sum_{r=j}^{i-1} z_{r,j}&   &\text{constraint $x_{i,j}$}& &\forall i \geq j: i, j \in [d]& \\
 &z_{k,j} \geq 0& && &\forall k \geq j : k, j \in [d].&
\end{align*}	
We omit constraint $c$ by normalizing all variables by the term $\sum_k\sum_j z_{k,j}$, which
ensures that the constraint corresponding to $c$ is feasible, but modifies the goal function to
$ \frac{\sum_r y_r}{ \sum_k \sum_j z_{k,j}} $.

\textbf{Intuition for the dual variables assignment:}
A natural assignment for $y_i$ is $y_i = 1$.  However, this yields a suboptimal solution to the dual linear program
and gives a lower bound of $2$ (see Appendix~\ref{apx:suboptimal} for more details).
A slightly more sophisticated assignment is $y_i = \frac{1}{i}$. To find an assignment for variables $z_{k,j}$, we look at the constraints corresponding to
$x_{i,j}$ as if they were `continuous' (i.e., for large $i\geq j$).  In doing so, we consider a differentiable function $f_j(x)$,
where $f'_j(k)$ approximately represents the variable $z_{k,j}$.  With this interpretation, the constraint corresponding to $x_{i,j}$
can be viewed as:

$$  i \cdot f'_j(i) + \int_j^i f'_j(x) dx \geq \frac{1}{i} \iff x \cdot f_j'(x) + f_j(x) - f_j(j) \geq \frac{1}{x}.$$ 
By solving this differential equation (assuming equality) with the boundary condition $f_j(j) = 0$ (since the variables
$z_{k,j}$ only exist for $k \geq j$), we get
$$ f_j(x) = \frac{\ln(x/j)}{x}, \quad f'_j(x)=\frac{1-\ln(x/j)}{x^2}.$$ 

\textbf{Formal feasible dual variables assignment:}
$$y_i = \frac{1}{i} , \quad z_{k,j} =  \begin{cases} 
\frac{1-\ln(k/j)}{k^2}&  \text{if } j \leq k  \leq  \lfloor e\cdot j \rfloor, \\
0 & \text{otherwise}.
\end{cases}
$$
With this assignment, we need to verify that constraint $x_{i,j}$ is feasible. For $i\leq \lfloor e \cdot j \rfloor$, we get
$$ i \cdot \frac{1-\ln(i/j)}{i^2} + \sum_{r=j}^{i-1} \frac{1-\ln(r/j)}{r^2} \geq \frac{1}{i} \iff
 \sum_{r=j}^{i-1} \frac{1-\ln(r/j)}{r^2} \geq  \frac{\ln(i/j)}{i},   $$ 
which holds due to Claim~\ref{cla:f} (we apply the claim with $f(x) = \frac{\ln(x/j)}{x}$).
For $i> \lfloor e \cdot j \rfloor$, we get
$$\sum_{r=j}^{\lfloor e \cdot j \rfloor} \frac{1-\ln(r/j)}{r^2} \geq \frac{1}{e \cdot j} \geq \frac{1}{i}, $$
where the first inequality is by Claim~\ref{cla:f}.

\textbf{Evaluating the goal function:}
Recall that $H(d) = 1 + \frac{1}{2} + \cdots + \frac{1}{d}$ denotes the $d^{th}$ harmonic number.  Applying Claim~\ref{cla:f}, we have 
$\sum_{r=j}^{\lfloor e \cdot j \rfloor} z_{r,j} \leq \frac{1}{e j} + \frac{1}{j^2}$.  This yields

$$  \frac{\sumd_{i=1}^d y_i}{ \sumd_{k=1}^d \sumd_{j=1}^d z_{k,j}} \geq \frac{H(d)}{\frac{H(d)}{e} + \sumd_j \frac{1}{j^2}} \rightarrow e,$$
since $\sum_{j=1}^d \frac{1}{j^2}$ is bounded by a constant, and $H(d) \rightarrow \infty$ as $d \rightarrow \infty$. 

\end{proof}

\section{Online Ad-auctions}\label{sec:oaa}
In the $d$-bounded online Ad-auctions problem, there are $n$ bidders which are known up front, each
with a budget of $B(i)$.  Products arrive online, and for each product $j$, at most $d$ bidders
are interested in buying the product.  Each such interested buyer $i$ bids $b_{i,j}$ for product $j$.
The mechanism then allocates product $j$ to a buyer, and gains a revenue of $b_{i,j}$ (a buyer cannot be
charged more than their budget).  The objective is to maximize the total revenue.
In the fractional version of the problem, the algorithm may sell fractions of each item $j$ to
multiple buyers.

We give a randomized lower bound of $1-\left(1-\frac{1}{d}\right)^d$ for the $d$-bounded online Ad-auctions problem (which
approaches $1 - \frac{1}{e}$).  The lower bound we present in this section can also be applied
to achieve the same lower bound against randomized algorithms for the online matching problem~\cite{KVV90} where
arriving nodes have a degree of at most $d$.

\begin{theorem}
There does not exist a randomized algorithm with a competitive ratio strictly better than $1-\left(1 - \frac{1}{d}\right)^{d}$ for the
$d$-bounded online Ad-auctions problem.
\end{theorem}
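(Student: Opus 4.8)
I would follow exactly the duality recipe that the paper establishes in the Vector Bin Packing section, since the Ad-auctions theorem is stated in that same framework. The plan is to first construct a parameterized family of input sequences for the $d$-bounded Ad-auctions problem, then encode the competitive-ratio constraints as a primal linear program whose optimum equals the best achievable competitive ratio, pass to the dual, and finally exhibit an explicit feasible dual solution whose objective value is $1 - (1 - 1/d)^d$. Because the objective here is \emph{maximization} of revenue (unlike the minimization in Vector Bin Packing), the roles of the inequalities flip: a $c$-competitive algorithm must collect at least $c \cdot \opt$ on every input, so the primal will have lower-bound constraints on the algorithm's accumulated revenue, and the competitive-ratio variable $c$ will be \emph{maximized} subject to feasibility.

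\textbf{The sequence and the primal.} For the sequence I would use the classical upper-triangular construction adapted to bounded degree: take $n$ bidders each with unit budget, and let products arrive in phases so that in phase $i$ exactly bidders $i, i+1, \ldots$ (a window of the appropriate width governed by $d$) bid on the product, with unit bids. The key structural fact, analogous to $\opt = j \cdot A$ after the $j$th phase in the Bin Packing proof, is that $\opt$ can exhaust all budgets by assigning each product to a fresh bidder, while any online algorithm facing the same prefix cannot distinguish which bidders will reappear. I would introduce primal variables $x_{i,j}$ for the fraction of product (or budget) that bidder $i$ contributes in phase $j$, one budget constraint per bidder (bounding total spending by the budget, the analogue of the $y_i$ equality constraints), and revenue constraints tying the accumulated revenue through each phase to $c \cdot \opt$. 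The objective is $\max c$. Taking the dual then produces a minimization program with one dual variable per budget constraint and per revenue constraint, together with a normalization that converts the objective into a ratio just as the $\sum_r y_r / \sum_{k,j} z_{k,j}$ step does in the excerpt.

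\textbf{The explicit dual solution and the obstacle.} The crux is to guess a closed-form feasible dual assignment analogous to $y_i = 1/i$, $z_{k,j} = (1 - \ln(k/j))/k^2$. I expect the right guess to come from treating the dual constraints as a recurrence or difference equation and solving its continuous relaxation, exactly as the paper solves $x f_j'(x) + f_j(x) - f_j(j) = 1/x$; here the relevant recurrence should reflect the factor $(1 - 1/d)$ appearing geometrically across the $d$ phases, so that the telescoped product yields $(1 - 1/d)^d$ and the objective value becomes $1 - (1 - 1/d)^d$. I would verify feasibility of each dual constraint by a monotonicity/summation argument of the same flavor as Claim~\ref{cla:f}, splitting into a range where the guessed solution is active and a tail range where it is truncated to zero. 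The main obstacle will be nailing down the correct window width and the exact discrete dual values so that every constraint holds with the truncation, while the objective ratio still evaluates exactly to $1 - (1 - 1/d)^d$ rather than merely approaching it; the bounded-degree parameter $d$ forces a finite, discrete geometric structure rather than the logarithmic continuous one in the Bin Packing case, so the estimates must be kept tight and exact rather than asymptotic. Finally, since the construction uses unit bids and can be duplicated, the lower bound extends from the fractional/deterministic setting to randomized algorithms and to online matching with arriving-node degree bounded by $d$, as claimed.
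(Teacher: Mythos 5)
Your overall recipe (sequence $\to$ primal $\to$ dual $\to$ explicit feasible dual point) matches the paper's, but the construction you propose at the first step is not the right one, and this is where the proof would fail. A sliding-window or upper-triangular sequence with unit bids is the classical construction for the \emph{unbounded}-degree $1-\frac{1}{e}$ bound; restricting it to a window of width $d$ does not produce the bound $1-\left(1-\frac{1}{d}\right)^d$. The paper's sequence is structurally different: it takes $n=d^{d-1}$ bidders with unit budgets and runs $d-1$ elimination phases followed by a final phase. In phase $k$ the surviving $R_k = n\left(\frac{d-1}{d}\right)^{k-1}$ bidders are partitioned into groups of size $d$, one product is offered per group, and then the adversary \emph{adaptively drops} from each group the bidder with the highest leftover budget (the one to whom the algorithm sold the least); in the final phase each survivor gets a dedicated product, which is why $OPT=n$. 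The geometric factor $\left(\frac{d-1}{d}\right)$ that you correctly anticipate in the dual solution comes precisely from this per-phase elimination of one bidder out of every $d$, not from any window width.

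The second, related gap is that your primal only has budget constraints and accumulated-revenue constraints, which encode an \emph{oblivious} adversary. The whole point of the paper's LP is to encode the adaptive adversary inside a single static program: there are constraints ($w_{k,i}$ in the paper's notation) forcing the designated dropped bidder in each group to be the one with minimal total amount sold, and constraints ($z_k$) that carry the surviving bidders' leftover budgets into the next phase under reindexing. Without these, the LP optimum would be the value of the best algorithm against a fixed sequence, which is too large, and no feasible dual point could certify $1-\left(1-\frac{1}{d}\right)^d$. (A minor further point: the paper's primal has no variable $c$ at all here; it maximizes the online revenue directly and compares the dual objective $n\left(1-\left(1-\frac{1}{d}\right)^d\right)$ against $OPT=n$, rather than maximizing $c$ subject to revenue lower bounds.) Your intuition about solving the dual recurrence to telescope a product of $\left(1-\frac{1}{d}\right)$ factors is sound, but it cannot be carried out until the adaptive group-elimination structure is both present in the sequence and reflected in the constraints.
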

\begin{proof} 
$ $ \\
\indent \textbf{Sequence definition:}
Let $n = d^{d-1}$ be the number of initial bidders.
Each bidder $i$ has a budget of $B(i) = 1$. Moreover, for every bidder $i$ and product $j$ that $i$ is interested in, we have $b_{i,j} = 1$.
Our sequence is composed of $d-1$ phases, in addition to a final phase. In each phase $k \in \{1,\ldots,d-1\}$, the adversary only sells to
some number of bidders $R_k$.  In particular, the adversary introduces $R_k/d$
products to $R_k$ bidders by grouping the $R_k$ bidders into $R_k/d$ groups (each of size $d$) and introduces a product for
each group. A bidder is in the set $[R_k]$ (recall that $[n] = \{1,\ldots,n\}$) if they were in the set $[R_{k-1}]$ and did not
have the highest leftover budget in their group from phase $k-1$ (in other words, from phase $k-1$ to $k$, the adversary drops a buyer
from each group).  We assume that the bidders
are reindexed at the beginning of each phase, so that each bidder's index is in the set $[R_k]$.
Note that $R_k = n \cdot \left(\frac{d-1}{d}\right)^{k-1}$ (since $R_1 = n$).
In the last phase, the adversary introduces a single product to each of the remaining $R_d$ players. 
Clearly, $OPT$ is $n$ since it can sell a product to each omitted bidder along with the rest of the bidders
in the final phase. \\

\textbf{The primal linear program:}
\begin{itemize}
	\item $x_{k,i}$ -- the amount sold to the $i^{th}$ player in the $k^{th}$ phase ($i \in R_k$).
	\item $t_{k,i}$ -- the total amount sold to the $i^{th}$ player up to the $k^{th}$ phase  ($i \in R_{k+1}$).
\end{itemize}
Note that the $i^{th}$ player in the $k^{th}$ iteration might not be the $i^{th}$ player in a different
iteration $k'$ (due to reindexing).
We associate two variables with each player $i$ in phase $k$: $x_{k,i}$ and $t_{k-1,i}$.

Let $\hat{i} = \lceil \frac{i}{d} \rceil$, and $G_a = \{(a-1)\cdot d+1,\ldots, a\cdot d-1\}$ for
all $a \in [R_k/d]$ (for a fixed phase $k$).  Note that the set $G_a \cup \{a \cdot d\}$ represents a group of $d$ players (namely, the players in group $a$ are those who
are interested in a particular product).  We exclude player $a \cdot d$ from the set $G_a$ for notational convenience.
Recall that the notation $a \mid b$ means that $a$ divides $b$.
When writing our linear programs, we refer to each constraint by its associated dual variable.
\begin{align*}
\text{max }   &\sumd_{k=1}^{d-1} \sumd_{i=1}^{R_k} x_{k,i} + \sumd_{i=1}^{R_d}  (1-t_{d-1,i})&&& \\
\text{s.t.: } &\sumd_{i \in G_a \cup 
	\{a \cdot d\}} x_{k,i} \leq 1&&   \text{constraint $y_{k,a}$}& &\forall k \in [d-1], a \in [R_k/d]&  \\
 &x_{k,d \cdot \hat{i}} + t_{k-1, d \cdot \hat{i}}  \leq x_{k,i} + t_{k-1,i}  &&\text{constraint $w_{k,i}$}& &\forall k \in [d-1],i \in [R_k] , d \nmid i&  \\
 &\sumd_{i\in R_{k}, d \nmid i} x_{k,i} + t_{k-1,i}  = \sumd_{i\in R_{k+1}}  t_{k,i}  &&\text{constraint $z_k$}& &\forall k \in [d-1]& \\
 &x_{k,i}, t_{k,i'} \geq 0 &&& &\forall k \in [d-1],i \in [R_k],i' \in [R_{k+1}], &
\end{align*}	
where the constraints $y_{k,a}$ capture the fact that we can sell at most one product per group (one for each group $a$ in phase $k$).
The constraints $w_{k,i}$ identify the bidder with the  highest leftover budget in their group during phase $k$.  Lastly, the
constraints $z_k$ correspond to reordering the bidders.  In fact, we even allow bidders' leftover budgets to be redistributed.
The goal function is the total amount sold from phases $1$ to $d-1$ plus the remaining leftover budgets from bidders in $R_d$.
Note that we define $t_{0,i} = 0$ for all $i$.

\textbf{The dual linear program:}
\begin{align*}
\text{min } &\sumd_{k=1}^{d-1} \sumd_{a=1}^{R_k/d} y_{k,a} + R_d  \\
\text{s.t.: }& y_{k,\hat{i}} - w_{k,i} + z_k \geq 1& &\text{ constraint $x_{k,i}$}& &\forall k\in [d-1],i \in [R_k] , d \nmid i&   \\
 &y_{k,\hat{i}} + \sum_{r \in G_{\hat{i} }}w_{k,r}  \geq 1& &\text{ constraint $x_{k,i}$}& &\forall k\in [d-1],i \in [R_k] , d \mid i&  \\
 & -w_{k+1,i} - z_{k} + z_{k+1} \geq 0    &&\text{ constraint $t_{k,i}$}& &\forall k\in [d-2],i \in [R_{k+1}] , d \nmid i&  \\
  &\sum_{r \in G_{\hat{i} }} {w_{k+1,r}} - z_{k}  \geq 0   && \text{ constraint $t_{k,i}$} &&\forall k\in [d-2],i \in [R_{k+1}] , d \mid i& \\
  &-z_{d-1}  \geq -1  &&\text{ constraint $t_{d-1,i}$}& &\forall i \in [R_{d}] \text{ (the same constraint)}& \\
  &y_{k,\hat{i}}, w_{k,i} \geq 0 &&& &\forall k \in [d-1],i \in [R_k], d \nmid i&
\end{align*}	
Note that the constraint corresponding to $t_{d-1,i}$ is independent of $i$ and appears $d-1$ times in the dual linear program.

\textbf{Intuition for the dual variables assignment:}
By symmetry, we assume that $w_{k,i} = w_{k,i'}$ for all $i,i'$ and denote this common value by $w_k$.
Similarly, we assume that $y_{k,a} = y_k$ for all $a$.  In addition, we assume that all constraints are tight $\forall k \in [d-1]$, which yields:
\begin{eqnarray}
&&y_{k} - w_{k} + z_k = 1, \label{ad:xki} \\
&&y_{k}+(d-1) w_k = 1, \label{ad:xkti} \\
&&-w_{k+1}-z_k+z_{k+1} = 0,  \label{ad:tki} \\
&&(d-1)w_{k+1} - z_k = 0, \label{ad:tkti} \\
&&z_{d-1} = 1. \label{ad:tdm}
\end{eqnarray}

\textbf{Formal feasible dual variables assignment:}
From Equations (\ref{ad:tki}) and (\ref{ad:tkti}), we derive
$ z_{k+1} = \left(\frac{d}{d-1}\right) \cdot z_k$.  Due to Equation (\ref{ad:tdm}), this yields
$$z_k = \left(\frac{d-1}{d}\right)^{d-k-1}.$$
From Equation (\ref{ad:tkti}) and Equation (\ref{ad:xkti}), we get
$$w_{k} = \frac{\left(\frac{d-1}{d}\right)^{d-k}}{d-1}, \quad y_{k} = 1-\left(\frac{d-1}{d}\right)^{d-k},$$
respectively. Finally, we need to verify that Equation (\ref{ad:xki}) holds:
\begin{eqnarray*}
y_{k} - w_{k} + z_k &=& 1-\left(\frac{d-1}{d}\right)^{d-k} - \frac{\left(\frac{d-1}{d}\right)^{d-k}}{d-1} + \left(\frac{d-1}{d}\right)^{d-k-1} \\
&=& 1 + \left(\frac{d-1}{d}\right)^{d-k-1}\left(-\frac{d-1}{d}-\frac{1}{d}+1\right) = 1.
\end{eqnarray*}
\noindent\textbf{Evaluating the goal function:}
\begin{eqnarray*}
R_d + \sumd_{k=1}^{d-1} y_{k} \cdot  \frac{R_k}{d} &=& n\cdot \left( \left(\frac{d-1}{d}\right)^{d-1} +\frac{\sumd_{k=1}^{d-1} \left(\frac{d-1}{d}\right)^{k-1}} {d}  
-(d-1)\frac{ \left(\frac{d-1}{d}\right)^{d-1}} {d}  \right)\\
&=& n \cdot \left(1-\left(\frac{d-1}{d}\right)^{d}\right).
\end{eqnarray*} 
\end{proof}

\section{Online Capital Investment}\label{sec:oci}
In this section, we study the online Capital Investment problem
We must produce many units of a commodity at minimum cost, where orders for units arrive online.
We have a set of machines, where each machine $m_i$ has a capital cost $c_i$ and production cost $p_i$.
At any time, the algorithm can choose to buy any machine for cost $c_i$.  The algorithm incurs a production
cost of $p_i$ if it uses machine $m_i$ to produce one unit of the commodity.  The goal is to minimize the total cost:
the sum of capital costs plus production costs.

We give a randomized lower bound that is arbitrarily close to $e$ for the Capital Investment problem.
In addition, our techniques enable us to give a different lower bound for each input configuration
(i.e., values for machine capital costs and production costs) that is tight
against each such configuration.
\begin{theorem}
There does not exist a randomized algorithm with a competitive ratio strictly better than $e$ for the
Capital Investment problem.
\end{theorem}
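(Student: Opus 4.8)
The plan is to apply the paper's recipe directly: build a parameterized family of input sequences, encode the constraints that any $c$-competitive algorithm must satisfy as a primal linear program whose objective is $\min c$, pass to the dual, and exhibit a single feasible dual solution whose value tends to $e$ as the parameter grows. Since a deterministic fractional algorithm can be viewed as the expectation of a randomized one, and (as noted in the remarks) this correspondence is immediate for Capital Investment, it suffices to rule out fractional algorithms with competitive ratio below $e$; weak duality then converts any feasible dual solution into a valid lower bound against all randomized algorithms. I would present the instance in full generality so that the same LP yields a tight per-configuration bound, and recover $e$ as the worst case over configurations (the promised instance-tight guarantee).

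First I would fix the instance, parameterized by $d$, using $d$ machines $m_1,\dots,m_d$ whose capital costs $c_i$ increase and production costs $p_i$ decrease, scaled so that the break-even demand at which $m_i$ first becomes the cheapest single machine forms a geometric sequence $D_1 < \cdots < D_d$. The adversary reveals demand in $d$ phases, where phase $i$ raises the cumulative demand to $D_i$ and may terminate after any phase. If it halts after phase $j$, the optimum buys $m_j$ and serves everything with it, so $\opt_j = c_j + p_j D_j$ (the configuration is tuned so this single-machine choice is genuinely optimal). A fractional algorithm, ignorant of the halting phase, must spread its capital across machines as demand grows, and this forced hedging is what drives the ratio up to $e$.

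Next I would write the primal LP. I introduce variables $x_{i,j}$ describing the fractional amount the algorithm has committed to machine $m_i$ by the end of phase $j$ (monotone nondecreasing in $j$, since purchases are irrevocable), together with the scalar $c$. For each candidate halting phase $j$ there is one constraint asserting that the algorithm's total expected cost through phase $j$ (capital paid for the machines it owns plus production paid on the demand of phases $1,\dots,j$) is at most $c\cdot\opt_j$; the remaining constraints enforce that the $x_{i,j}$ form a consistent, monotone profile that actually serves all arriving demand. The objective is $\min c$. I then dualize, associating multipliers with the per-phase cost constraints and with the profile constraints, exactly as in Sections~\ref{sec:vbp} and~\ref{sec:oaa}.

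For the dual solution I expect the two-step heuristic used for Vector Bin Packing to carry over: treat the phase index as continuous, read the constraint on the cost-multipliers as a differential equation in an auxiliary function, solve it under the natural boundary condition, and set the discrete dual variables to the sampled values. The resulting multipliers should be geometric or exponential in the phase index (mirroring the factors $\left(\frac{d-1}{d}\right)^{d-k}$ appearing in the Ad-auctions dual), feasibility of the discretized assignment would be verified with Claim~\ref{cla:f}, and the objective would evaluate to a ratio of two sums whose limit is $e$ as $d\to\infty$. The main obstacle, I expect, is the primal modeling step rather than the dual arithmetic: I must encode the algorithm's combined capital-plus-production cost at every possible stopping phase as genuinely linear constraints in the $x_{i,j}$ while faithfully capturing irrevocability, so that the LP optimum really equals the best achievable competitive ratio. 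Once the primal is set up correctly and the break-even thresholds are chosen geometrically, extracting the closed-form dual and letting $d\to\infty$ should yield the tight bound of $e$, with arbitrary $c_i,p_i$ giving the claimed instance-tight lower bounds.
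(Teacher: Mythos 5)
Your outline follows the paper's recipe exactly (phased demand, primal LP in purchase/production fractions with a per-phase competitive-ratio constraint, weak duality, reduction from randomized to deterministic fractional), so the strategy is right. But as written this is a plan, not a proof: every load-bearing step is deferred. You never fix the instance, never write the LP, never exhibit the dual solution, and you yourself flag the primal modeling as "the main obstacle" without resolving it. Moreover, two of your concrete expectations point in the wrong direction. First, the paper does \emph{not} use geometric break-even demands: it takes capital costs $i+1$, production costs $2^{-i^2}$, and cumulative demands $2^{k^2}$, i.e.\ doubly-exponential separation. This is what makes $\opt_k = k+2$ exactly (so the dual normalization $\sum_k (k+2) z_k \le 1$ produces a harmonic sum) and, more importantly, what lets the dual set $y_{k,i}=0$ for $k>i$: feasibility of the $q_{k,i}$ constraint there needs $z_k \cdot 2^{k^2-i^2} \ge w_k$, which fails if consecutive demands differ only by a constant factor. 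Second, the feasible dual assignment is not geometric in the phase index as you predict by analogy with Ad-auctions; it is harmonic-type, $z_k = \frac{1}{k(k+1)}$ and $w_k = e(1-\epsilon)\ln\bigl(\frac{k+1}{k}\bigr)$ (cut off at $k \le \epsilon n$), coming from the ODE solution $f(x) = -\frac{1}{ex}$ and the inequality $x \ge e\ln x$. The limit $e$ then falls out of $\ln(n\epsilon)/H(n) \to 1$, not out of a $\bigl(1-\frac{1}{d}\bigr)^d$-style computation.

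There is also a modeling subtlety you would have to get right: to keep the per-phase cost constraint linear and to make the LP a genuine relaxation, the paper lets $q_{k,i}$ denote the fraction of \emph{all} demand through phase $k$ attributed to machine $i$, effectively refunding earlier production; this only helps the algorithm, so the lower bound survives. Your formulation ("production paid on the demand of phases $1,\dots,j$") tracks per-phase production separately, which is workable but requires extra variables and care to remain a valid relaxation. In short: right framework, but the construction, the dual certificate, and its feasibility verification — the actual mathematical content — are missing, and the specific guesses you do commit to (geometric demand thresholds, geometric dual multipliers) are not the ones that make the argument go through.
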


\begin{proof} $ $

\indent \textbf{Sequence definition:} 
The setting for the problem is a set of $n$ machines where machine $m_i$ has a capital cost of $i+1$ and a production cost of $2^{-i^2}$.  Our input sequence consists of $n$ phases, where in phase $k$
the online algorithm needs to produce a total of $2^{k^2}$ products.  That is, we introduce $2^{k^2} - 2^{(k-1)^2}$
orders for units in phase $k$, with two orders being introduced in the first phase.
Clearly, $OPT$ is $k+2$ in phase $k$.
	
\textbf{The primal linear program:}

\begin{itemize}
	\item $x_{k,i}$ -- the fraction bought of the $i^{th}$ machine in the $k^{th}$ phase.
	\item $q_{k,i}$ -- the fraction of products produced by the $i^{th}$ machine in the $k^{th}$ phase.
	\item $c$ -- a variable representing the competitive ratio guarantee.
\end{itemize}

\begin{align*}
\text{min } &  c && &\\
\text{s.t.: }&  \sum_{r=1}^k x_{r,i} \geq q_{k,i}  &&\text{constraint $y_{k,i}$}& &\forall k, i \in [n]&\\
 &\sum_{i=1}^{n} q_{k,i}  = 1  &&\text{constraint $w_k$}& &\forall k \in [n]& \\
  & \sumd_{r=1}^{k} \sumd_{i=1}^{n} (i+1) \cdot x_{r,i}  +2^{k^2} \sumd_{i=1}^n 2^{-i^2} q_{k,i} \leq c \cdot (k+2)  &&\text{constraint $z_k$} &&\forall k \in [n]& \\
  &c,x_{k,i}, q_{k,i} \geq 0 &&& &\forall k,i \in [n],&
\end{align*}	
where constraints $y_{k,i}$ capture the fact that a machine cannot be used more the amount paid for it.  The constraints $w_k$ imply that we need to use
a machine to produce units.  Constraints $z_k$ capture the competitive ratio guarantee of the online algorithm.  Namely, in phase $k$, the online algorithm's
total capital costs plus production costs must not exceed $c \cdot OPT$.  Note that we allow the online algorithm to produce all products in phase $k$,
which allows the online algorithm to be refunded for the production done in previous phases.  This of course does not decrease $OPT$ and may only
decrease the competitive ratio.

\textbf{The dual linear program:}
\begin{align*}
\text{max } &\sumd_{k = 1}^n w_k&&&  \\
\text{s.t.: }&  \sumd_{k = 1}^n (k+2)\cdot z_k \leq 1 &&\text{ constraint $c$} \\
  &(i+1) \sumd_{r=k}^{n} z_r \geq \sumd_{r=k}^{n} y_{r,i} && \text{ constraint $x_{k,i}$}& \forall k, i \in [n]  \\
  & y_{k,i} \geq   w_k - z_k \cdot 2^{k^2-i^2}    &&\text{ constraint $q_{k,i}$}& \forall k,i \in [n] \\
  &y_{k,i}, z_{k} \geq 0 &&& \forall k,i \in [n]&.
\end{align*}	
Note that we can omit constraint $c$ by replacing the goal function with $\frac{ \sum w_k}{\sum (k+2) z_k}$,
which is achieved by normalizing all dual variables appropriately.

\textbf{Intuition for the dual variables assignment:}
A natural assignment is $y_{k,i} = w_k$ for $k \leq i$ and $0$ otherwise (since $2^{k^2-i^2}$ should dominate $w_k/z_k$),
along with $w_k=\frac{1}{k}$.  This assignment yields
$$ \qquad (i+1) \sumd_{r=k}^{n} z_r \geq \sumd_{r=k}^{i} w_r \qquad\qquad\qquad\quad \text{ constraint $x_{k,i}$} \quad\quad\quad \forall k\geq i: k,i \in [n].$$
We view constraints $x_{k,i}$ as if they were `continuous' (i.e., for large $i\geq k$).  In doing so,
we consider two differentiable functions $f(x)$ and $g(x)$, where $f'(k)$ and $g'(i)$ represent approximately the variables
$z_{k}$ and $w_{i}$, respectively.  With this, we get $\forall i \geq k$
 $$  (i+1) \int_{k}^{n+1} f'(x) dx \geq \int_k^{i+1} g'(x) dx \iff f(n+1) - f(k)  \geq \frac{\ln(\frac{i+1}{k})}{i+1}, $$
which holds for 
$$ f(x) = -\frac{1}{e\cdot x}, \quad f'(x) = \frac{1}{e\cdot x^2},$$
since $\ln(x)/x\leq 1/e$ for $x\geq 1$.  We assume that $f(n) \rightarrow 0$ as $n \rightarrow \infty$ and ignore it.

\textbf{Formal feasible dual variables assignment:} Let $\epsilon$ be a small constant, the assignment is:
\begin{itemize}
	\item $y_{k,i} = w_k$, for $k \leq i \leq n$ and $0$ otherwise.
	\item $z_k = \frac{1}{k(k+1)} $, for all $k \leq n$.
	\item $w_k=e \cdot (1-\epsilon) \ln\left(\frac{k+1}{k}\right)$, for $ k\leq n\cdot \epsilon$ and $0$ otherwise.
\end{itemize} 		

First, we verify that constraints $q_{k,i}$ hold for $k > i \geq 1$ (they trivially hold for $k\leq i$):

$$ w_{k} - z_{k} \cdot 2^{k^2-i^2} \leq e \ln\left(\frac{k+1}{k}\right) - \frac{ 2^{2\cdot k-1}}{k(k+1)} \leq 0 = y_{k,i}.  $$

Now we verify that constraints $x_{k,i}$ hold
for all $i \leq  \epsilon \cdot n$ (since for $i >  \epsilon \cdot n$ the left hand side of constraint $x_{k,i}$
increases while the right hand side remains the same as $i$ increases). 
By assigning  to the dual variables:
\begin{eqnarray*}
(i+1) \sumd_{r=k}^{n} z_r  - \sumd_{r=k}^{i} w_r &=& (i+1) \left(\frac{1}{k}-\frac{1}{n+1}\right) - e \cdot \ln\left(\frac{i+1}{k}\right)\cdot( 1-\epsilon) \\ &=&
( 1-\epsilon)  \left(\frac{i+1}{k} - e \cdot \ln\left(\frac{i+1}{k}\right)\right) + \left(\frac{\epsilon \cdot(i+1)}{k} - \frac{i+1}{n+1}\right) \\ &\geq& 0,
\end{eqnarray*} 
where the inequality holds since $x \geq e \cdot \ln(x)$ for all $x\geq 1$ and $ k\leq i \leq \epsilon \cdot  n$.
 
\textbf{Evaluating the goal function:}
Recall that $H(n) = 1 + \frac{1}{2} + \cdots + \frac{1}{n}$ denotes the $n^{th}$ harmonic number.  We get a lower bound of
$$ \frac{\sumd_{k=1}^n w_k}{\sumd_{k=1}^n (k+2) z_k} = \frac{e \cdot \ln(n\cdot \epsilon)\cdot (1-\epsilon)}{\sum_{k=1}^n\frac{k+2}{k(k+1)}} \geq \frac{e \cdot \ln(n\cdot \epsilon)\cdot (1-\epsilon)}{H(n) + C_1} \rightarrow e \cdot (1-\epsilon),$$
where $C_1$ and $\epsilon$ are some constants.  This gives the theorem.
\end{proof}

\section{Conclusions}
We introduce a new technique for proving online lower bounds using duality.  Using our framework,
we show how to construct new, tight lower bounds for three diverse online problems.  In particular,
we give tight lower bounds for online Vector Bin Packing for arbitrarily large dimensions, online Ad-auctions in the bounded
degree setting, and online Capital Investment.  As a corollary, we also obtain tight lower bounds for online
bipartite matching in the bounded degree setting.  We are also able to reconstruct many existing online lower bounds.
We are certain that the techniques we develop here can have far-reaching implications, and can be used to improve existing
lower bounds along with proving new, tight lower bounds as well.

\newpage
\appendix
\section*{Appendix}
\section{Suboptimal Assignment for Vector Bin Packing}\label{apx:suboptimal}
\textbf{Intuition for the dual variables assignment:}
We note that by assigning $y_i = 1$ for all $i$ leads to the following non-optimal solution (with a value approaching 2).
To find an assignment for variables $z_{k,j}$, we look at the constraints corresponding to
$x_{i,j}$ as if they were `continuous' (i.e., for large $i\geq j$).  With this interpretation, we get:
$$  i \cdot f'_j(i) + \int_j^i f'_j(x) dx \geq 1 \iff x \cdot f_j'(x) + f_j(x) - f_j(j) \geq 1.$$
By solving this differential equation (assuming equality) with the boundary condition $f_j(j) = 0$, we get:
$$ f_j(x) = 1-\frac{j}{x}, \quad f'_j(x)=\frac{j}{x^2}.$$

\textbf{Formal feasible dual variables assignment:}
$$y_i = 1 , \quad z_{k,j} =  \begin{cases} 
\frac{j-1}{k\cdot (k-1)}&  \text{if } k>1, k \geq j,   \\
1 & \text{if } k = j = 1,\\
0 & \text{otherwise}.
\end{cases}
$$

With this assignment, we need to verify that constraint $x_{i,j}$ is feasible for all $i \geq j$. For $j=1$ and for $i \geq j$, the constraint corresponding to $x_{i,1}$
easily holds, since both sides of the inequality evaluate to $1$.  For $j > 1$, we get:
$$ i \cdot \frac{j-1}{i \cdot (i-1)} + \sum_{r=j}^{i-1} \frac{j-1}{r \cdot (r-1)} = 
\frac{j-1}{i-1} + (j-1) \cdot \left( \frac{1}{j-1} - \frac{1}{i-1} \right)  = 1. $$

\textbf{Evaluating the goal function:}
Note that $\sum_{k=j}^d z_{k,j} =  1 - \frac{j-1}{d},$
and hence the value of the goal function is
$$  \frac{\sumd_{i=1}^d y_i}{ \sumd_{k=1}^d \sumd_{j=1}^d z_{k,j}} = \frac{d}{d - \sumd_{j=1}^d \frac{j-1}{d}} = \frac{d}{d-\frac{d-1}{2}}  
 \rightarrow 2.$$

\section{Simulating Randomized Algorithms with Deterministic Fractional Algorithms}\label{apx:rvbp}
In this section, given a randomized algorithm, we show how to simulate it using a deterministic fractional algorithm.
In particular, we guarantee that the performance
of the simulated deterministic fractional algorithm for any sequence is the same as the expected performance
of the randomized algorithm.  Therefore, a lower bound for any deterministic fractional algorithm
is also a lower bound for any randomized algorithm.
In general, such a simulation is done by considering the randomized algorithm's expected behavior,
and viewing it as a deterministic fractional algorithm where the fractions are the expected probabilities
of the randomized algorithm (with the appropriate definition of a fractional algorithm).

For the online Ad-auctions problem and the online Capital Investment problem, this is straightforward.
In particular, for the online Ad-auctions problem, the expectation of the randomized algorithm's
behavior is equivalent to splitting each item.  For the online Capital Investment problem, the
randomized algorithm's behavior is equivalent to purchasing fractional machines.

For the Vector Bin Packing problem, we need to be more careful about our definition of a fractional algorithm,
since the behavior of the randomized algorithm includes both splitting vectors randomly and
opening bins randomly.  A randomized algorithm that splits vectors randomly can be viewed as a deterministic
algorithm splitting vectors (according to the expectation), while opening bins randomly can be viewed
as opening a fractional amount of bins (according to the expectation).

\bibliographystyle{plain}
\bibliography{refs}

\end{document}